\theoremstyle{definition}
\newtheorem{thm}{Theorem}[section]
\newtheorem{ex}{Example}[section]
\title{Size-stochastic Knapsack Online Contention Resolution Schemes}
\author{Toru Yoshinaga}
\author{Yasushi Kawase}
\affil{The University of Tokyo}
\date{}
\begin{document}
\maketitle
\begin{abstract}
Online contention resolution schemes (OCRSs) are effective rounding techniques for online stochastic combinatorial optimization problems.
These schemes randomly and sequentially round a fractional solution to a relaxed problem that can be formulated in advance.
In this study, we propose OCRSs for online stochastic generalized assignment problems. 
In the problem of our OCRSs, sequentially arriving items are packed into a single knapsack, and their sizes are revealed only after insertion.
The goal of the problem is to maximize the acceptance probability, which is the smallest probability among the items being placed in the knapsack.
Since the item sizes are unknown beforehand, a capacity overflow may occur.
We consider two distinct settings: the hard constraint, where items that cause overflow are rejected, and the soft constraint setting, where such items are accepted.
Under the hard constraint setting, we present an algorithm with an acceptance probability of $1/3$ and prove that no algorithm can achieve an acceptance probability greater than $3/7$.
Under the soft constraint setting, we propose an algorithm with an acceptance probability of $1/2$ and demonstrate that this is best possible.
\end{abstract}
\section{Introduction}
The \emph{Online Stochastic Generalized Assignment Problem (OSGAP)} is a fundamental problem that represents several significant online problems, such as online knapsack problems~\cite{papastavrou1996dynamic,han2015randomized}, web advertising problems~\cite{MVV07,FKMMP09}, and online scheduling problems~\cite{DGV08,KRT97}.
In the OSGAP, multiple knapsacks with unit capacity are given first, and then a sequence of items with varying sizes and values arrives in an online manner.
Upon the arrival of an item, it must either be packed into a knapsack or discarded. The size of an item is revealed only after the placement in a knapsack. Instead, distributional information on the size of each item is available at the beginning.
The objective of the problem is to maximize the total value of the packed items. This problem can be considered as a scheduling problem where parallel machines process a sequence of jobs. Each knapsack corresponds to a machine with a maximum load, and each item corresponds to a job with processing time and profit. While the processing time for each job can be predicted, the actual processing time is determined only after the job is completed.

A promising approach for designing an online algorithm is to utilize an \emph{Online Contention Resolution Scheme (OCRS)}.
This scheme initially constructs a fractional solution to a relaxed problem from a priori information, and then stochastically and sequentially rounds it into a solution to the original stochastic online problem.
This concept was introduced by Feldman et al.~\cite{FSZ16}, 
and it has recently been applied to various problems, such as prophet inequality problems~\cite{LS18,JMZ22}, mechanism design of posted-price mechanisms~\cite{CM16,FSZ21}, and online matchings~\cite{EFGT20,FZHZ21,PRA22}.

Alaei et al.~\cite{AHL13} introduced an OCRS called the \emph{Generalized Magician Problem (GMP)}, for the OSGAP under the \emph{large-scale assumption}, in which the size of any arriving item does not exceed $1/k$ of the knapsack capacity, where $k$ is a prefixed positive integer.
The GMP is a stochastic online knapsack problem in which a unit-capacity knapsack is given initially, followed by a sequence of items with varying sizes arriving online. Upon an item's arrival, it must be packed into the knapsack or discarded. Here, items can be placed in the knapsack only when at least $1/k$ of the knapsack capacity remains. The goal of the problem is to maximize the \emph{acceptance probability}, which is the smallest probability among the items of being placed in the knapsack. If the acceptance probability for the GMP is $\gamma$, their OCRS guarantees a $\gamma$-fraction of the optimal value in expectation. Alaei et al.~\cite{AHL13} proved that the $\gamma$-conservative algorithm achieves the acceptance probability of $(1-1/\sqrt{k})$.
However, when $k=1$, no algorithm can guarantee a positive acceptance probability.

An OSGAP algorithm based on the GMP attempts to put an item in a knapsack only when there is no risk of overflow. However, in applications such as online scheduling, it is reasonable to try to put an item in a knapsack if there is a chance it will fit. Motivated by this, we consider a slightly different OCRS called the \emph{size-stochastic knapsack OCRS (KOCRS)} problem.
Unlike the GMP, the KOCRS problem assumes that each item can take any size, and we can try to put an item in a knapsack regardless of the remaining capacity.

\subsection{Our Results}
We provide a competitive analysis for the KOCRS problem.
Since a capacity overflow may occur in our setup, we consider constraint settings of \emph{hard} and \emph{soft}.
In the hard and soft constraint settings, items that cause overflow are rejected and accepted, respectively.
These settings can be interpreted in online scheduling as follows.
In the hard constraint setting, jobs that are not completed within a certain period are rejected and bring no profit.
On the other hand, in the soft constraint setting, jobs that start processing within the allotted time are accepted, regardless of whether they are completed by the time or not.

For the hard constraint setting, we provide an algorithm that has an acceptance probability of $1/3$ (Theorem~\ref{thm:hard_alg}).
We also devise an algorithm that has an acceptance probability of $3/7$ when the item sizes are limited to a binary distribution on $1$ and a sufficiently small positive real $\epsilon$.
Further, we prove that the acceptance probability is at most $3/7$ for any algorithm, even when the item sizes are limited to $\{\epsilon,1\}$ (Theorem~\ref{thm:hard_upper}).
For the soft constraint setting, we present an algorithm with the acceptance probability of $1/2$ (Theorem~\ref{thm:soft_alg}) and show that this is best possible even when item sizes are restricted to $\{\epsilon,1\}$ (Theorem~\ref{thm:soft_upper}).
Our results are summarized in Table~\ref{tab:OurResults}.

In Section~\ref{subsec:relation}, we discuss how algorithms for the KOCRS problem can be incorporated into the rounding framework for OSGAP, similar to those for GMP by Alaei et al.~\cite{AHL13}.
For the soft constraint setting, our OCRS implies an online algorithm for the OSGAP that gives a $1/2$-fraction of the optimal value of a relaxation problem in expectation.
For the hard constraint setting, our OCRS guarantees at least $1/3$-fraction of the optimal value if the values of the items are determined in advance.

\begin{table}[t]
    \centering
    \caption{Summary of the acceptance probability of the best algorithm for the KOCRS problem.}
    \label{tab:OurResults}
    \begin{tabular}{cccc}
    \toprule
    \multirow{2}{*}{Setting} & \multirow{2}{*}{Item size} & \multicolumn{2}{c}{Acceptance Probability}\\
     & & Lower bound & Upper bound\\
    \midrule
    Hard constraint & $[0,1]$ & $1/3$\ (Theorem~\ref{thm:hard_alg}) & $3/7$ (Theorem~\ref{thm:hard_upper})\\
    Hard constraint & $\{\epsilon,1\}$ & $3/7$\ (Theorem~\ref{thm:special_alg}) & $3/7$ (Theorem~\ref{thm:hard_upper})\\
    Soft constraint & $[0,1]$ & $1/2$\ (Theorem~\ref{thm:soft_alg}) & $1/2$ (Theorem~\ref{thm:soft_upper})\\
    Soft constraint & $\{\epsilon,1\}$ & $1/2$\ (Theorem~\ref{thm:soft_alg}) & $1/2$ (Theorem~\ref{thm:soft_upper})\\
    \bottomrule
    \end{tabular}
\end{table}

\subsection{Related Work}
The Generalized Assignment Problem (GAP), in the offline and deterministic problem setting, is APX-hard. The current best approximation factor for the problem is $1-1/e+\epsilon$ where $\epsilon\approx10^{-180}$~\cite{FV06}.

Marchetti-Spaccamela and Vercellis~\cite{MSV95} initiated the study of the online knapsack problem. In their work, they showed that the competitive ratio of any algorithm becomes arbitrarily small for the general case. Therefore, in the online knapsack problem, it is common to make natural restrictions based on real-world situations. For example, in the context of scheduling and web advertising, it is often assumed that the value and size of each item follow known probability distributions~\cite{A14,ZN08}.

The GMP and the KOCRS problem can be seen as generalizations of the classical prophet inequality problem~\cite{KS78} and the \emph{magician's problem (MP)}~\cite{A14}.
The MP is a special case of the GMP, in which the size of each item is either $1/k$ or $0$.
Alaei~\cite{A14} presented the $\gamma$-conservative algorithm, which attains the acceptance probability of $1-1/\sqrt{k+3}$ for the MP.
Note that, in the MP and the GMP, each item can be accepted only when there is no possibility of overflow, which differs from our setting.

When constructing OCRSs for the online knapsack problem, a rounding technique that places each arriving item into the knapsack with a certain probability is often considered.
Jiang et al.~\cite{JMZ22} studied the knapsack OCRS problem, assuming that the item size follows a probability distribution, but the size of the item is revealed before its allocation.
Their best-fit algorithm places each item with probability $1/(3+e^{-2})\approx 0.319$.
Note that, in our setup, the size is determined only after insertion.
Our algorithms only guarantee the acceptance probability with respect to only items, rather than pairs of items and their size realizations.
Consequently, our setup is more challenging as the size realization is determined only after the insertion of each item, but easier in terms of the acceptance probability guarantee not being dependent on the realization of items.

Feldman et al.~\cite{FSZ21} provided a unified OCRS that can be applied to important families of constraints in online selection problems, such as matroids, matchings, and knapsacks.
For a size-deterministic online knapsack problem, their method guarantees a $3/2-\sqrt{2}(\approx 0.0858)$-fraction of the offline optimal value in expectation.
Their OCRS produces an extremely robust algorithm: their rounding technique guarantees a constant acceptance probability even against the almighty adversary---the strongest adversary model that generates inputs with complete knowledge of the algorithm, including the random behaviors.

Adamczyk and W{\l}odarczyk~\cite{AM18} pioneered the study of OCRSs in the random order model.
They provided a random-order contention resolution scheme for the intersection of matroids and knapsacks.
Our problem assumes an adversarial order of arrival for the items rather than a random order.

D\"{u}tting et al.~\cite{DFKL20} developed a general framework for stochastic online allocation problems.
Their threshold-based framework is applicable to a value-stochastic welfare maximization problem under knapsack constraints and derives an algorithm that guarantees a $1/5$-fraction of the expected optimal welfare.
Moreover, if all arriving items have sizes at most $1/2$ of the knapsack capacity, it guarantees a $1/3$-fraction of the expected optimal welfare.
In online knapsack problems, there have been several studies on algorithms that classify items by size and change the processing accordingly~\cite{FSZ21,SVX20}.
However, since the size of items is not determined upon arrival in our setting, such a technique is basically inapplicable.

\section{Preliminaries}
In this section, we formally define the KOCRS problem. 
In the KOCRS problem, we are initially given a knapsack of unit capacity, and then $n$ items are given one by one.
The size $S_i$ of the $i$th item is a random variable drawn from a distribution $F_i$.
Upon the arrival of item $i$, the distribution $F_i$ is revealed, and we must immediately decide whether to try to insert the item into the knapsack or irrevocably discard it.
After attempting to insert the item into the knapsack, the size $S_i$ is determined.
If the item fits the knapsack, it is irrevocably placed in the knapsack.
It should be noted that we can try to insert an item into the knapsack even when there is a high probability of violating capacity.

When the total size of the inserted items exceeds the capacity, there are two settings where the last item is discarded or accepted.
We refer to such a last item as a \emph{critical item}.
The settings in which the critical item is discarded and accepted are called hard constraint and soft constraint settings, respectively.
In either setting, all items following the critical item are discarded.

We assume that $F_1,\dots,F_n$ are independent probability distributions on $[0,1]$.
In addition, it is promised that the total expected size of the items is less than or equal to the capacity of the knapsack, that is, $\sum_{i=1}^n\mathbb{E}[S_i]\le 1$.
The goal is to design a policy that accepts every item with probability at least $\gamma$, for a constant $\gamma\in[0,1]$ as large as possible.

The assumption that the total expected size of the items is at most $1$ does not provide a guarantee that all items can always be successfully inserted into the knapsack.
We confirm this fact with the following example, which also clarifies how the process differs between hard and soft constraint settings.
\begin{ex}\label{ex:procedure}
Suppose that there are three items: the first item has a size of $1/2$ with probability $1$, and the second and third items have sizes of $1$ with probabilities of $1/4$ and $0$ with probabilities of $3/4$, i.e.,
\begin{align}
S_1=1/2\quad\text{w.p. }1,\qquad
S_2=\begin{cases}1 & \text{w.p. }1/4,\\0 & \text{w.p. }3/4,\end{cases}\qquad
S_3=\begin{cases}1 & \text{w.p. }1,\\ 0 & \text{w.p. }3/4.\end{cases}
\end{align}
Note that the total expected size of the items is $1/2+1/4+1/4=1$.

Let us consider the strategy of greedily inserting incoming items into the knapsack.
Since the knapsack is initially empty, the first item is successfully inserted into the knapsack, and its size is revealed to be $1/2$ with probability $1$.

When attempting to insert the second item, it is revealed to be of size $1$ and exceeds the capacity of the knapsack with probability $1/4$.
Therefore, in the hard constraint setting, the second item is discarded with probability $1/4$. In the soft constraint setting, the second item is always accepted, but no further items can be accepted with probability $1/4$.

Finally, the procedure successfully inserts the third item only with probability $3/4\cdot 1/4=3/16$ and $3/4$ in the hard and soft constraint settings, respectively. 
Thus, it is not always possible to place all the items with probability $1$.
\end{ex}

We propose a policy that achieves a constant acceptance probability for any input sequence.
The policy determines the probability of inserting incoming items by tracking the probability distribution of the knapsack, which can be computed by using the size distributions of the items that have arrived.

It should be noted that an OCRS by Jiang et al.~\cite{JMZ22} addressed a setting in which the size of each item is revealed before deciding whether to insert the item into the knapsack. 
They demonstrated that the $\gamma$-conservative algorithm, which inserts each item with probability $\gamma$ with priority on the least capacity utilized sample paths, is not suitable.
Instead, they introduced the best-fit algorithm, which prioritizes inserting each item with priority on the higher capacity utilized sample paths among those that can accommodate the item.
However, in our setting, we only know whether an item will fit in the knapsack or not after attempting to insert it. Therefore, the best-fit algorithm cannot be directly applied to our setting.

We will adopt a policy that takes into account the possibility of capacity overflows and selects the sample paths that utilize the most capacity while ensuring that the probability of each item being placed is exactly $\gamma$.

\section{Rounding Framework for the OSGAP}\label{subsec:relation}
We describe how the KOCRS problem can be employed to round a fractional solution to a relaxed problem of the OSGAP.
Specifically, we demonstrate that, by utilizing an algorithm for the KOCRS problem of acceptance probability $\gamma$, we can construct an algorithm for the OSGAP that achieves at least $\gamma$-fraction of the value of the fractional solution.
For each integer $k$, let $[k]\coloneqq \{1,2,\dots,k\}$.

The OSGAP is defined as follows.
Initially, we are given $m$ knapsacks, each with a capacity of $1$. Then, $n$ items are presented in an online manner.
Each item has a pair of a value and a size distribution called a \textit{type}.
Item $i$ has $r_i$ possible types, and its actual type is initially known only probabilistically.
Upon the arrival of item $i\in[n]$, its type is determined to be $t\in[r_i]$ with probability $p_{it}$.
Then, the item must be inserted into one of the knapsacks or discarded immediately and irrevocably.
If the capacity constraint is violated in the process, the acceptance/rejection of the critical item is determined according to the setting: the critical item is accepted in the soft constraint setting and rejected in the hard constraint setting.
In both settings, items cannot be inserted into a knapsack after the critical item appears for it.
If item $i$ of type $t$ is inserted into knapsack $j$, its value is $v_{itj}$, and its size is determined according to distribution $F_{itj}$ on $[0,1]$.
We assume that all information on $r_i$, $p_{it}$, $v_{itj}$, and $F_{itj}$ (for each $i\in[n]$, $t\in[r_i]$, and $j\in [m]$) is known from the beginning.

Recall that Alaei et al.~\cite{AHL13} studied the OSGAP problem under an additional assumption called the large-capacity assumption: no item takes a size larger than $1/k$ of the knapsack capacity, where $k$ is a prefixed positive integer.
In contrast, our setting allows items to take any size.
In addition, Alaei et al.~\cite{AHL13} assumed that items are only placed in knapsacks with a remaining capacity of at least $1/k$, whereas we make no such assumption.

We construct an online algorithm for the problem by using a linear programming (LP) relaxation problem derived from a priori information.
The algorithm sequentially and stochastically rounds the optimal solution of the LP through the algorithms for the size-stochastic knapsack.
We analyze the expected instance of the problem where everything happens as per expectation. Specifically, for each item $i$ of type $t$ and knapsack $j$, we denote the expected size of the item as $\tilde{s}_{itj}=\mathbb{E}[S_{itj}]$, which is computable before the items arrive.
The linear programming is formalized as follows:
\begin{align}
\begin{array}{rll}
 \text{max.}&\sum_{i\in [n]}\sum_{t\in [r_i]}\sum_{j\in [m]} v_{itj}x_{itj}&\\
 \text{s.t.}&\sum_{i\in [n]}\sum_{t\in [r_i]} \tilde{s}_{itj}x_{itj}\leq 1 &\forall j \in[m],\\
&\sum_{j\in [m]} x_{itj}\leq p_{it} &\forall i\in[n],\,t\in[r_i],\\
&x_{itj}\in[0,1] &\forall i\in[n],\,t\in[r_i],\,j\in[m].
\end{array}\label{eq:LP}
\end{align}

We interpret the optimal solution $x_{itj}^*$ of the LP as the probability of inserting an item. Note that the expected total size of the item set sent to each knapsack is at most $1$. It would be ideal to insert items according to the relaxed problem's assignment, but it may cause a capacity violation, in a similar way to what we observed in Example~\ref{ex:procedure}. To avoid this, we use the KOCRS problem to realize the assignment with probability $\gamma\cdot x_{itj}^*$, and obtain $\gamma$-fraction in expectation. To be precise, we execute the following procedure: item $i$ of type $t$ is sent to knapsack $j$ with probability $x^*_{itj}/p_{it}$, and an algorithm for the KOCRS problem determines whether actually to place the item into the knapsack or discard it. We formally describe our procedure in Algorithm~\ref{alg:OCRS}.

\begin{algorithm}[thbp]
\KwIn{$\gamma\in[0,1],\ (F_{itj})_{i\in[n],t\in[r_i],j\in[m]}$, an algorithm for the KOCRS problem}
\KwOut{Item allocation in $m$ knapsacks and their size realization}
Solve LP~\eqref{eq:LP} and get the optimum $x^*$\;
For each knapsack $j\in[m]$, construct an instance $I_j$ of the KOCRS problem\;
\For{$i\gets1$ \KwTo $n$}{
  Let $t$ be the type of item $i$ and 
  select a knapsack (or nothing) $j^*\in [m]\cup\{\varnothing\}$ at random such that each knapsack $j\in[m]$ is chosen with probability $x_{itj}^*/p_{it}$\;
  \For{$j\gets1$ \KwTo $m$}{
    Let $F_{ij}$ be the CDF with $F_{ij}(s)=(1-\sum_{t'\in[r_i]}x^*_{it'j})+\sum_{t'\in[r_i]}x_{it'j}F_{it'j}(s)$\;
    Feed $F_{ij}$ to $I_j$ as the CDF of the $i$th item size\;
  }
  \uIf{Item $i$ is accepted in the algorithm for the KOCRS problem of $j^*$}{
    Assign item $i$ to knapsack $j^*$\; 
    Let the realization of the $i$th item be $0$ for $I_j$ with $j\ne j^*$ and $S_{itj^*}$ for $I_{j^*}$\;
  }
  \lElse{
    Discard item $i$
  }
}
\caption{Rounding Framework for the OSGAP with a KOCRS algorithm}\label{alg:OCRS}
\end{algorithm}

The combination of Algorithm~\ref{alg:OCRS} with an algorithm for the KOCRS problem of acceptance probability $\gamma$ has the following guarantee for the OSGAP.
\begin{thm}\label{thm:gamma_frac}
For the soft constraint setting, Algorithm~\ref{alg:OCRS} obtains at least $\gamma$-fraction of the optimal value of the LP if the acceptance probability of the utilized KOCRS problem is at least $\gamma$.
For the hard constraint setting, the same statement holds if the value of each item is independent of its type (i.e., $v_{itj}=v_{ij}$ for all $i,j,t$).
\end{thm}

\begin{proof}
The algorithm sent item $i$ of type $t$ to knapsack $j$ with probability $x^*_{itj}/p_{it}$
In the soft constraint setting, the item is successfully placed in the knapsack with probability $\gamma$.
Thus, the contribution of item $i$ to the objective value is $\sum_{t\in[r_i]}v_{itj}p_{it}(x^*_{itj}/p_{it})\cdot\gamma=\gamma\sum_{t\in[r_i]}v_{itj}x^*_{itj}$.

However, in the hard constraint setting, item $i$ of type $t$ sent to knapsack $j$ may be accepted with a probability less than $\gamma$ because the probability of overflow depends on the type.
Nevertheless, since the probability that item $i$ sent to knapsack $j$ fits is at least $\gamma$ in total, the contribution of item $i$ to the objective value is $v_{ij}p_{it}\sum_{t\in[r_i]}(x^*_{itj}/p_{it})\cdot\gamma=\gamma\sum_{t\in[r_i]}v_{ij}x^*_{itj}$ if the value of the item is independent of its type.

By summing up all the contributions of the items, we obtain $\gamma\sum_{i\in[n]}\sum_{j\in[m]}\sum_{t\in[r_i]}v_{itj}x^*_{itj}$, which is equal to the optimal value of the LP \eqref{eq:LP} multiplied by $\gamma$.
\end{proof}

\section{Analysis for the Hard Constraint Setting}\label{sec:hard}
This section analyzes the KOCRS problem with the hard constraint setting.
We first provide an instance of the hard constraint KOCRS problem, in which, for any algorithm, there exists an item with an acceptance probability at most $3/7~(\approx 0.429)$.
We then provide an algorithm that can accept each item with probability at least $1/3~(\approx 0.333)$. 
Furthermore, we observe that each item can be accepted with a probability of at least $3/7$ if the sizes of the items are restricted to $1$ or a small positive $\epsilon\le 1/n$.

\subsection{Impossibility}
We demonstrate that no algorithm can accept every arriving item with a probability greater than $3/7$.
The fact implies that our problem belongs to an essentially different class from the MP with $k=1$ (i.e., the prophet inequality). Indeed, for the MP with $k=1$, we can ensure the acceptance probability of $1/2$.

\begin{thm}\label{thm:hard_upper}
    For any positive real $\delta$, there exists an instance of the hard constraint KOCRS problem where no algorithm has the acceptance probability of at least $3/7+\delta$.
    This impossibility holds even when item sizes are restricted to $1$ or a sufficiently small positive constant.
\end{thm}
\begin{proof}
    Let $\epsilon$ be a sufficiently small positive real.
    We will show that no algorithm has the acceptance probability of at least $3/7$ for the following instance with four items as $\epsilon$ approaches $0$:
    \begin{itemize}
        \item the size of the first item $S_1$ is $\epsilon$ with probability $1$,
        \item the sizes of the second and the third items $S_2$ and $S_3$ are $1$ with probability $1/2-\sqrt{\epsilon}$ and $\epsilon$ with probability $1/2+\sqrt{\epsilon}$, and
        \item the size of the last item $S_4$ is $\epsilon$ with probability $1$.
    \end{itemize}
    Note that the total expected size of the items is
    \begin{align}
    \sum_{i=1}^4\mathbb{E}[S_i]
    &=\epsilon+2\cdot(1\cdot(1/2-\sqrt{\epsilon})+\epsilon\cdot(1/2+\sqrt{\epsilon}))+\epsilon\\
    &=1-2\cdot\epsilon^{1/2}+3\cdot\epsilon+2\cdot\epsilon^{3/2}
    < 1-2\cdot\epsilon^{1/2}+\frac{3}{2}\cdot\epsilon^{1/2}+\frac{2}{4}\cdot\epsilon^{1/2}
    = 1,
    \end{align}
    where the inequality holds if $\epsilon<1/4$.

    We will investigate the existence of an algorithm that accepts each item with a probability of at least $\gamma$. To this end, it suffices to consider the existence of an algorithm that accepts any item with exactly a probability of $\gamma$.
    Since this algorithm accepts the first item with probability $\gamma$, the size of the item inside the knapsack after processing the first item becomes $\epsilon$ with probability $\gamma$, and $0$ with probability $1-\gamma$ (see Figure~\ref{subfig:hard-upper-a}).
    Suppose that the algorithm tries to insert the second item with probability $\alpha\in[0,1]$ under the condition that the first item is placed in the knapsack.
    If it tries to insert and the second item is revealed to be of size $1$, then the second item of size $1$ will be rejected since it violates the knapsack capacity.
    Hence, the probability that the first item is placed in the knapsack and the second item is also placed in the knapsack is $\gamma\cdot\alpha(1/2+\sqrt{\epsilon})$.
    To accept the second item with probability $\gamma$ overall, the algorithm inserts the second item with probability $(\gamma-\gamma\alpha(1/2+\sqrt{\epsilon}))/(1-\gamma)$, under the condition that the first item is discarded.
    The total size of the items in the knapsack, just after processing the second item, can be summarized as follows:
    the probability of exceeding $1$ (violating the capacity) is $\gamma\cdot \alpha/2+O(\sqrt{\epsilon})$,
    the probability of being exactly $1$ is $\gamma(1/2-\alpha/4)+O(\sqrt{\epsilon})$,
    the probability of being $\epsilon$ or $2\epsilon$ is $\gamma(3/2-3\alpha/4)+O(\sqrt{\epsilon})$,
    and the probability of being $0$ is $1-\gamma(2-\alpha/2)+O(\epsilon)$ (see Figures~\ref{subfig:hard-upper-b} and \ref{subfig:hard-upper-c}).
    For the third item, assume that the algorithm tries to insert it with probability $\beta\in[0,1]$, under the condition that the knapsack utilization is $\epsilon$ or $2\epsilon$.
    Similar to the insertion of the second item, the algorithm also tries to insert the third item into the knapsack with a utilization rate of $0$, as appropriate, to ensure that the third item can be inserted with an overall probability $\gamma$.
    The total size of the items in the knapsack, just after processing the third item, is in $\{\epsilon,2\epsilon,3\epsilon\}$ with probability $\gamma(2-3\alpha/4-3\beta/4)+O(\sqrt{\epsilon})$ and $0$ with probability $1-\gamma(3-\alpha/2-\beta/2)+O(\sqrt{\epsilon})$ (see Figure~\ref{subfig:hard-upper-d}).
    Here, the probability $1-\gamma(3-\alpha/2-\beta/2)+O(\sqrt{\epsilon})$ must be nonnegative.
    By considering $\epsilon\to 0$, we have 
    \begin{align}
    1-\gamma\left(3-\frac{\alpha}{2}-\frac{\beta}{2}\right)\ge 0. \label{eq:hard-upper1}
    \end{align}
    For the last item, it can be accepted with probability at most
    \begin{align}
    \MoveEqLeft
        \gamma\left(2-\frac{3\alpha}{4}-\frac{3\beta}{4}\right)+O(\sqrt{\epsilon})+1-\gamma\left(3-\frac{\alpha}{2}-\frac{\beta}{2}\right)+O(\sqrt{\epsilon})\\
        &=1-\gamma\left(1+\frac{\alpha}{4}+\frac{\beta}{4}\right)+O(\sqrt{\epsilon}).
    \end{align}
    To accept the last item with probability $\gamma$, 
    the value $1-\gamma\left(1+\frac{\alpha}{4}+\frac{\beta}{4}\right)+O(\sqrt{\epsilon})$ must be at least $\gamma$.
    Thus, by setting $\epsilon\to 0$, we have 
    \begin{align}
    1-\gamma\left(1+\frac{\alpha}{4}+\frac{\beta}{4}\right)\ge \gamma. \label{eq:hard-upper2}
    \end{align}

    \begin{figure}[htbp]
        \centering
        \begin{subfigure}[t]{0.48\textwidth}
        \centering
            \begin{tikzpicture}
                \draw[-stealth,thick](0,0)--(0,4)node[left]{1}--(0,4.5);
                \draw[-stealth,thick](0,0)--(4,0)node[below]{1}--(5.5,0)node[below right]{$W$};
                \draw[thick] (4,0) -- (4,0.1);
                \draw[thick] (0,4) -- (0.1,4);
                \draw (0,0) rectangle (1,2) node[pos=.5] {\scriptsize Item 1};
                \draw[thick] (1,0.1) -- (1,0) node[below] {$\epsilon$};
                \coordinate (a) at (1,0);
                \coordinate (b) at (1,2);
                \draw [thick] (a) to [bend right=30] node [fill=white, midway] { $\gamma$ } (b);
            \end{tikzpicture}
        \subcaption{The utilization of the knapsack after processing the first item. The horizontal axis represents the utilization of the knapsack, while the vertical axis represents the amount of probability. This figure represents that the knapsack utilization is $\epsilon$ with probability $\gamma$ and $0$ with probability $1-\gamma$.}\label{subfig:hard-upper-a}
        \end{subfigure}\hfill
        \begin{subfigure}[t]{0.48\textwidth}
        \centering
            \begin{tikzpicture}
                \draw[-stealth,thick](0,0)--(0,4)node[left]{1}--(0,4.5);
                \draw[-stealth,thick](0,0)--(4,0)node[below]{1}--(5.5,0)node[below right]{$W$};
                \draw (0,0) rectangle (1,2) node[pos=.5] {\scriptsize Item 1};
                \fill [lightgray] (1,0) rectangle (5,0.5);
                \draw (1,0) rectangle (5,0.5) node[pos=.5] {\scriptsize Item 2(1)};
                \draw (1,0.5) rectangle (2,1);
                \draw[dotted] (2,1) -- (5,1);
                \coordinate (c) at (5,0);
                \coordinate (d) at (5,1);
                \draw [thin] (c) to [bend right=60] node [fill=white, right=-0.2] { $\gamma\alpha$ } (d);
                \draw[thin](1.5,0.75)--(2.5,1.4)node[fill=white, right]{\scriptsize Item 2($\epsilon$)};
                \draw (0,2.5) rectangle (1,3);
                \draw[dotted] (0,3) -- (5,3);
                \draw[dotted] (4,2) -- (5,2);
                \coordinate (e) at (5,2);
                \coordinate (f) at (5,3);
                \draw [thin] (e) to [bend right=60] node [fill=white, midway] { $\gamma-\gamma\alpha(\frac{1}{2}+\sqrt{\epsilon})$ } (f);
                \draw (0,2) rectangle (4,2.5)node[pos=.5] {\scriptsize Item 2(1)};
                \draw[thin](0.5,2.75)--(1.5,3.4)node[fill=white, right]{\scriptsize Item 2($\epsilon$)};
                \draw[thick] (4,0) -- (4,0.1);
                \draw[thick] (0,4) -- (0.1,4);
                \draw[thick] (1,0.1) -- (1,0) node[below] {$\epsilon$}; 
                \draw[dotted] (2,1) -- (2,0);
                \draw[thick] (2,0.1) -- (2,0) node[below] {$2\epsilon$};
                \draw[thick] (5,0.1) -- (5,0) node[below] {$1+\epsilon$};
            \end{tikzpicture}
        \subcaption{The utilization of the knapsack after processing the second item. For the second item, we denote it as Item 2(1) if its size realization is $1$ and Item 2($\epsilon$) if the realization is $\epsilon$. The item colored in gray is not accepted due to overflow.}\label{subfig:hard-upper-b}
        \end{subfigure}\\
        \begin{subfigure}[t]{0.48\textwidth}
        \centering
            \begin{tikzpicture}
                \draw[-stealth,thick](0,0)--(0,4)node[left]{1}--(0,4.5);
                \draw[-stealth,thick](0,0)--(4,0)node[below]{1}--(5.5,0)node[below right]{$W$};
                \draw (0,0) rectangle (1,0.5) node[pos=.5] {\scriptsize Item 1};
                \fill [lightgray] (1,0) rectangle (5,0.5);
                \draw (1,0) rectangle (5,0.5) node[pos=.5] {\scriptsize Item 2(1)};
                \draw (0,0.5) rectangle (4,1)node[pos=.5] {\scriptsize Item 2(1)};
                \draw (0,1) rectangle (1,2.5)node[pos=.5] {\scriptsize Item 1};
                \draw (0,2.5) rectangle (1,3);
                \draw (1,1) rectangle (2,1.5);
                \draw[thin](1.5,1.25)--(2,2.6)node[fill=white, right]{\scriptsize Item 2($\epsilon$)};
                \draw[thin](0.5,2.75)--(2,2.6);
                \draw[dotted] (1,3)--(3.5,3);
                \coordinate (g) at (3.5,3);
                \coordinate (h) at (3.5,1);
                \draw [thin] (g) to [bend left=60] node [fill=white, midway,right=-1] { $\gamma(\frac{3}{2}-\frac{3}{4}\alpha)+O(\sqrt{\epsilon})$ } (h);
                \coordinate (i) at (0,3);
                \coordinate (j) at (0,4);
                \draw [thin] (i) to [bend right=60] node [fill=white, midway,right=-0.1] { $1-\gamma(2-\frac{1}{2}\alpha)+O(\sqrt{\epsilon})$ } (j);
                \draw[thick] (4,0) -- (4,0.1);
                \draw[thick] (0,4) -- (0.1,4);
                \draw[thick] (1,0.1) -- (1,0) node[below] {$\epsilon$}; 
                \draw[dotted] (2,1) -- (2,0);
                \draw[thick] (2,0.1) -- (2,0) node[below] {$2\epsilon$};
                \draw[thick] (5,0.1) -- (5,0) node[below] {$1+\epsilon$};
            \end{tikzpicture}
        \subcaption{The utilization of the knapsack after processing the second item, sorted by the knapsack utilization. The utilization of $\epsilon$ and $2\epsilon$ can be considered equivalent in the sense that the items that can be put in the future are the same.}\label{subfig:hard-upper-c}
        \end{subfigure}\hfill
        \begin{subfigure}[t]{0.48\textwidth}
        \centering
            \begin{tikzpicture}
                \draw[-stealth,thick](0,0)--(0,4)node[left]{1}--(0,4.5);
                \draw[-stealth,thick](0,0)--(4,0)node[below]{1}--(5.5,0)node[below right]{$W$};
                \path [pattern=north east lines, pattern color=black] (0,1.5)--(4,1.5)--(4,0.75)--(5,0.75)--(5,0)--(0,0);
                \draw (0,1.5)--(4,1.5)--(4,0.75) node[fill=white,left=1.3,fill opacity=0.9,draw opacity=1]{$W\ge1$}--(5,0.75)--(5,0)--(0,0);
                \coordinate (k) at (0,3);
                \coordinate (l) at (0,4);
                \draw [thin] (k) to [bend right=60] node [fill=white, midway,right=-0.1] { $1-\gamma(3-\frac{1}{2}\alpha-\frac{1}{2}\beta)+O(\sqrt{\epsilon})$ } (l);
                \draw (0,3)--(1,3)--(1,2.5) node[below]{\footnotesize$\epsilon\le W\le3\epsilon$}--(2,2.5)--(2,2)--(3,2)--(3,1.5);
                \draw[dotted] (1,3)--(3,3);
                \coordinate (m) at (3,3);
                \coordinate (n) at (3,1.5);
                \draw [thin] (m) to [bend left=60] node [fill=white, pos=0.4,right=-0.5] { $\gamma(2-\frac{3}{4}\alpha-\frac{3}{4}\beta)+O(\sqrt{\epsilon})$ } (n);
                \draw[thick] (4,0) -- (4,0.1);
                \draw[thick] (0,4) -- (0.1,4);                
            \end{tikzpicture}
        \subcaption{The sorted utilization of the knapsack after processing the third item. 
        The last item cannot be added if the utilization is at least $1$ (hatched area).
        }\label{subfig:hard-upper-d}
        \end{subfigure}
        \caption{The behavior of an algorithm for the instance that accepts every item with probability exactly $\gamma$ where $W$ is a random variable representing the utilization rate of the knapsack.}
        \label{fig:thm31proof}
    \end{figure}
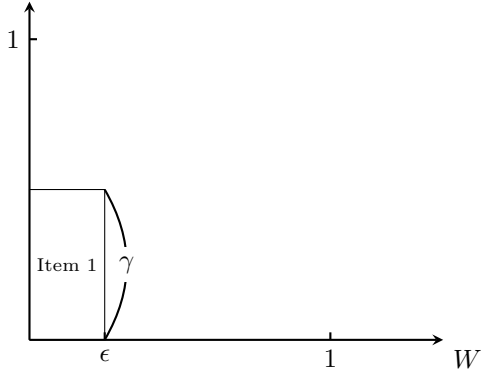
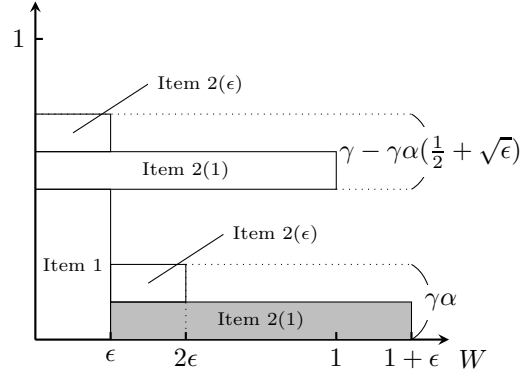
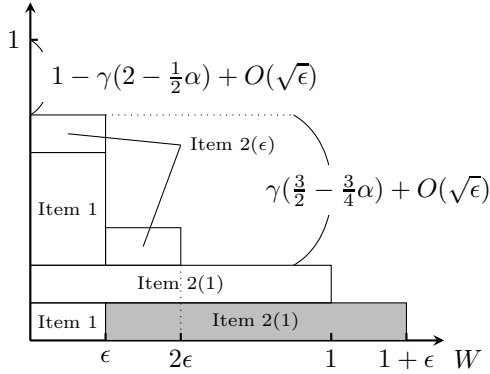
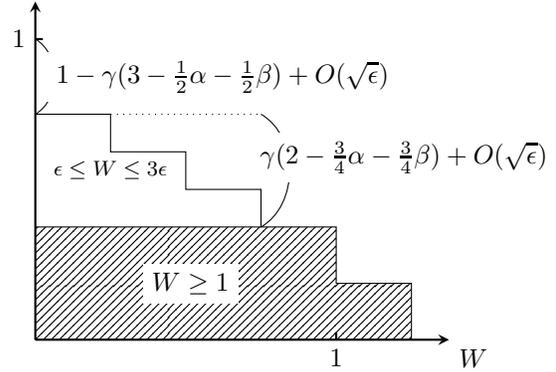
    By summing up inequality \eqref{eq:hard-upper2} twice and inequality \eqref{eq:hard-upper1}, we obtain $3-5\gamma\ge 2\gamma$.
    Hence, the upper bound of $\gamma$ is $3/7$.
    This implies that no algorithm can accept every item with a probability of at least $3/7+\delta$.
\end{proof}

\subsection{Algorithm}
Next, we prove that each item can be placed in the knapsack with a probability of at least $1/3$ by an aggressive policy.
The \emph{$\gamma$-aggressive algorithm} is an algorithm that selects sample paths so that items are accepted with probability $\gamma$, giving priority to sample paths that utilize more of the knapsack capacity.
Let $W_i$ be the random variable representing the total size of items in the knapsack just before processing the $i$th item, and let $F_{W_i}$ be the cumulative distribution function of $W_i$ (i.e., $F_{W_i}(w)=\Pr[W_i\le w]$). 
The algorithm uses a threshold $\theta_i\in[0,1]$ to determine whether to try to insert the $i$th item into the knapsack. If $W_i>\theta_i$, it tries to insert the item; if $W_i<\theta_i$, it discards the item; and if $W_i=\theta_i$, it tries to insert with an appropriate probability $q_i\in[0,1]$.
Note that, under a fixed $W_i=w$, the $i$th item fits the knapsack with probability $\Pr[S_i\le 1-w]$.
Thus, the algorithm successfully place the $i$th item with probability $\Pr[\theta_i<W_i\le 1-S_i]+q_i\cdot\Pr[\theta_i= W_i\le 1-S_i]$. 
To accept the $i$th item with probability $\gamma$, we set $\theta_i$ as the infimum of the set $\big\{\theta\in[0,1]\mid \Pr[\theta\le W_i\le 1-S_i]\ge\gamma\big\}$ 
and $q_i$ as $(\gamma-\Pr[\theta_i< W_i\le 1-S_i])/\Pr[\theta_i= W_i\le 1-S_i]$ (when $\Pr[\theta_i=W_i\le 1-S_i]=0$, we set $q_i$ to an arbitraly value in $[0,1]$).
Here, such $\theta_i$ and $q_i$ exist if $\Pr[0<W_i\le 1-S_i]+\Pr[W_i=0]\ge \gamma$. 
Conversely, if such $\theta_i$ and $q_i$ exist, the algorithm successfully inserts the $i$th item with probability $\gamma$.
We will call the algorithm \emph{feasible} if such $\theta_i$ and $q_i$ exist for every item $i$.
For a given instance, if the $\gamma$-aggressive algorithm is feasible, it means that the algorithm successfully inserts every item with probability $\gamma$.

In summary, the $\gamma$-aggressive algorithm for the hard constraint setting tries to insert the $i$th item into the knapsack with the following probabilities:
\begin{itemize}
    \item if $\theta_i<W_i\le 1$, then the probability is $1$,
    \item if $W_i=\theta_i$, then the probability is $(\gamma-\Pr[\theta_i< W_i\le 1-S_i])/\Pr[\theta_i= W_i\le 1-S_i]$, and
    \item if $W_i<\theta_i$, then the probability is $0$,
\end{itemize}
where $\theta_i=\inf\big\{\theta\in[0,1]\mid \Pr[\theta\le W_i\le 1-S_i]\ge\gamma\big\}$.

We show that the $\gamma$-aggressive algorithm can guarantee each item to be accepted with a probability of $1/3$.
\begin{thm}\label{thm:hard_alg}
For any $\gamma\in (0,1/3]$, 
the $\gamma$-aggressive algorithm is feasible for any instance of the hard constraint KOCRS problem.
\end{thm}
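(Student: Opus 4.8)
The plan is to prove feasibility item by item, maintaining throughout that the knapsack retains enough room for the current item. First I would fix the convention that a knapsack which has already overflowed is \emph{dead}, recording $W_i$ as a value in $[0,1]$ on alive paths and a value exceeding $1$ on dead paths, so that the single event $\{W_i+S_i>1\}$ captures both a genuine overflow of an alive knapsack and the automatic rejection on a dead one. The first observation is that, since the decision to attempt item $i$ depends only on $W_i$ (hence on $S_1,\dots,S_{i-1}$ and the internal coins) and is independent of $S_i$, attempting on \emph{every} alive path maximizes the current acceptance probability; consequently the $\gamma$-aggressive algorithm can realize acceptance probability exactly $\gamma$ at step $i$ if and only if $\Pr[\text{alive},\,W_i+S_i\le 1]\ge\gamma$, equivalently $\Pr[W_i+S_i>1]\le 1-\gamma$. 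Thus the theorem reduces to establishing the invariant $\Pr[W_i+S_i>1]\le 1-\gamma$ at every step, for $\gamma\le 1/3$.

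Next I would set up the induction. Assuming feasibility for the first $i-1$ items, each of them is accepted with probability exactly $\gamma$; writing $\delta_j$ for the probability that item $j$ is attempted and overflows, the dead mass is $d_i=\sum_{j<i}\delta_j$ and the attempt probability of item $j$ is $\gamma+\delta_j$. The budget $\sum_j\mathbb{E}[S_j]\le 1$ enters through a potential argument on $\mathbb{E}[\min(W_i,1)]=d_i+\mathbb{E}[W_i;\text{alive}]$: at each step this potential grows by the expected size of a fitted item plus, on an overflowing path, the leftover room $1-W_j$, and using $1-W_j<S_j$ on an overflow together with the independence of the attempt from $S_j$, the increment is at most $\mathbb{E}[S_j](\gamma+\delta_j)$. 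Hence $\mathbb{E}[\min(W_i,1)]\le\sum_{j<i}\mathbb{E}[S_j](\gamma+\delta_j)$, linking the accumulated dead-plus-utilized mass to the budget and to the overflow increments.

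The crux—and the step I expect to be the main obstacle—is to convert this into $\Pr[W_i+S_i>1]\le 1-\gamma$ with the sharp constant $\gamma=1/3$. A plain Markov bound on the alive overflow, $\Pr[\text{alive},W_i+S_i>1]\le\mathbb{E}[W_i;\text{alive}]+\mathbb{E}[S_i]$, is too lossy, because it ignores the defining feature of the aggressive rule: it never touches the low-utilization paths, so their fitting mass is conserved across steps until the threshold $\theta_j$ descends onto them. The plan is therefore to strengthen the induction to a statement about the entire lower tail of $W_i$—for instance a stochastic-dominance hypothesis asserting that $\Pr[\text{alive},\,W_i\le w]$ stays above an explicit reference profile—and to show that one aggressive update preserves this profile while consuming budget at the rate dictated by $\delta_j$. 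Balancing the two opposing effects of an overflow, namely that it simultaneously enlarges $d_i$ and is "paid for" by expected size, should produce a linear inequality in $\gamma$ of the same flavor as the $3-5\gamma\ge 2\gamma$ balance obtained in the impossibility proof, whose solution is exactly $\gamma\le 1/3$. Verifying that this balance survives for the continuous, adversarially chosen size distributions is the delicate part, and I would treat the boundary randomization at $W_i=\theta_i$ and the degenerate case $\Pr[\theta_i=W_i\le 1-S_i]=0$ separately.
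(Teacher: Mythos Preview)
Your proposal has a genuine gap: you correctly reduce feasibility at step $i$ to the inequality $\Pr[W_i+S_i>1]\le 1-\gamma$, and your potential bound $\mathbb{E}[\min(W_i,1)]\le\sum_{j<i}(\gamma+\delta_j)\,\mathbb{E}[S_j]$ is valid, but you then explicitly leave the crux unproven. The plan to ``strengthen the induction to a statement about the entire lower tail'' via an unspecified stochastic-dominance profile is not an argument; it is exactly the place where all the work lies, and nothing you have written forces the constant $1/3$ rather than something worse. Your appeal to the impossibility calculation is also misplaced: the inequality $3-5\gamma\ge 2\gamma$ you quote gives $\gamma\le 3/7$, not $1/3$, so even as a heuristic it points to the wrong number. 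As you yourself note, a Markov bound from the potential is too lossy, and you supply no concrete replacement.

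The paper's proof takes a completely different and much shorter route, avoiding any per-item tail analysis. The key structural fact about the aggressive rule is that the measure of alive paths with positive usage never exceeds $\gamma$, so every such path is automatically re-attempted at the next step; when one overflows, a fresh zero-usage path is recruited in its place, and on that fresh path the current item is tried against an empty knapsack. This lets one couple the whole run to a single renewal process: draw $X_1,\dots,X_n$ and independent copies $X'_1,\dots,X'_n$ from $F_1,\dots,F_n$; accumulate $X_1,X_2,\dots$ until the partial sum first exceeds $1$ at some index $k$; then restart a new segment with $X'_k,X_{k+1},\dots$; and repeat. If $T$ denotes the number of segments, feasibility is equivalent to $\gamma\cdot\mathbb{E}[T]\le 1$. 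Since each of the first $T-1$ segments sums to more than $1$ and all segments together sum to at most $\sum_i X_i+\sum_i X'_i$, one gets $T\le 1+\lfloor\sum_i X_i+\sum_i X'_i\rfloor$ and hence $\mathbb{E}[T]\le 1+2\sum_i\mathbb{E}[S_i]\le 3$. The bound $\gamma\le 1/3$ then falls out in one line, with the $3$ arising transparently as ``one initial segment plus two copies of the size budget,'' and no tail-distribution bookkeeping at all.
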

\begin{proof}
    To analyze the $\gamma$-aggressive algorithm, we consider the following procedure:
    \begin{enumerate}
        \item Draw $X_1,\ldots, X_n$ and $X'_1,\ldots, X'_n$ independently each from the item size distributions $F_1,\ldots,F_n$.
        \item Set $T$ to $1$. If $\sum_{i=1}^nX_i\le 1$, return $T$.
        Otherwise, let $k$ be the smallest index such that $\sum_{i=1}^k X_i>1$.
        \item Increment $T$ by $1$. 
        If $X'_k+\sum_{i=k+1}^nX_i\le 1$, return $T$.
        Otherwise, let $\ell$ be the smallest index such that $X'_k+\sum_{i=k+1}^\ell X_i>1$.
        \item Set $k$ to $\ell$ and repeat step $3$.
    \end{enumerate}    
    We shall explain the correspondence between this procedure and our algorithm.
    The algorithm prioritizes the selection of sample paths in order of high knapsack utilization.
    Hence, the probability that the utilization of the knapsack is in $(0,1]$ is always at most $\gamma$.
    Consequently, the algorithm tries to insert an item if the usage of the knapsack is in $(0,1]$.
    For sample paths where an item is inserted, we may assume that the algorithm will try to insert the next item.
    The sample paths of $\gamma$ in which the first item is inserted correspond to step 2.

    If there are sample paths with capacity violations for an item, the algorithm selects sample paths of the corresponding proportion without item insertions.
    In the procedure, step 2 with $\sum_{i=1}^k X_i>1$ corresponds to such a sample path with a capacity violation at $k$th item (with item sizes $X_1,\dots,X_k$). 
    Step 3 corresponds to a newly selected sample path (with item sizes $X_k',X_{k+1},\dots,X_{\ell}$).
    The value $T$ represents the number of sample paths required to process the items when $(X_1,\dots,X_n,X'_1,\dots,X'_n)$ occurs.
    Thus, $\gamma\cdot\mathbb{E}[T]$ corresponds to the total amount of sample paths needed for the algorithm.
    It is sufficient to keep this value at most $1$ to guarantee the feasibility.

    We provide an upper bound of $\mathbb{E}[T]$.
    During the above procedure, the value $T$ is incremented by $1$ only when $\sum_{i=1}^kX_i>1$ in step 2 or $X_k'+\sum_{i=k+1}^\ell X_i>1$ in step 3 occurs. 
    Hence, for a fixed $(X_1,\dots,X_n,X'_1,\dots,X'_n)$, the number of times such an increment occurs is at most $\lfloor X_1+\cdots+X_n+X'_1+\cdots+X'_n\rfloor$.
    Thus, we obtain
    \begin{align}
    \mathbb{E}[T]
    &\le 1+\mathbb{E}\big[\lfloor X_1+\cdots+X_n+X'_1+\cdots+X'_n\rfloor\big]  \\
    &\le 1+\mathbb{E}[X_1+\cdots+X_n]+\mathbb{E}[X'_1+\cdots+X'_n]\le 3,
    \end{align}
    where the last inequality holds by the assumption that the total expected size of the items is at most $1$.
    Therefore, if we fix $\gamma\le1/3$, the algorithm is feasible for any instance since $\mathbb{E}[T]\cdot\gamma\leq 1$.
\end{proof}

Furthermore, we observe that our analysis in Theorem~\ref{thm:hard_alg} is best possible.
\begin{thm}
    For any $\gamma>1/3$, there exists an instance such that the $\gamma$-aggressive algorithm is not feasible.
\end{thm}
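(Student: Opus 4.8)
The plan is to exhibit, for each $\gamma>1/3$, a three-item instance on which the $\gamma$-aggressive algorithm cannot keep the acceptance probability of the last item at $\gamma$. Fix a parameter $a\in(0,1)$ close to $1$ and set the item sizes to be $S_1=\beta$ with probability $1$; $S_2=1$ with probability $a$ and $S_2=0$ with probability $1-a$; and $S_3=\rho$ with probability $1$, where $\beta=\rho=(1-a)/2$. The total expected size is $\beta+a+\rho=1$, so the instance is admissible. The idea is that the middle item is the sole source of overflow, and because the aggressive algorithm always loads the fullest sample paths first, the overflow it creates is concentrated exactly on the paths it has just filled with item $1$ --- and under the hard constraint those paths are killed for the remainder of the process.

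First I would compute the fill distribution $W_3$ just before the last item, tracking the aggressive algorithm's choices. Since $S_1$ always fits, item $1$ is accepted on a $\gamma$-measure of paths, so $W_2$ equals $\beta$ with probability $\gamma$ and $0$ with probability $1-\gamma$. For item $2$ the algorithm, giving priority to the higher fill level $\beta$, first tries all the fill-$\beta$ paths: there item $2$ overflows whenever $S_2=1$ (probability $a$), and under the hard constraint this rejects item $2$ and discards item $3$ on that $\gamma a$-measure of paths. These ``dead'' paths are the crux. To bring the acceptance of item $2$ back up to $\gamma$, the algorithm must additionally try a $\gamma a$-measure of empty paths, on which $S_2=1$ fills the knapsack to capacity on a further $\gamma a^2$-measure. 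Collecting terms, just before item $3$ a measure $\gamma a$ of paths is dead, a measure $\gamma a^2$ is filled to exactly $1$, and the rest have fill in $\{0,\beta\}$.

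Then item $3$ (of size $\rho\le 1-\beta$) fits on every surviving path except the $\gamma a^2$-measure of full paths, and it cannot be placed on the $\gamma a$-measure of dead paths. Hence the largest acceptance probability attainable for item $3$ is $1-\gamma a-\gamma a^2=1-\gamma a(1+a)$, and feasibility at item $3$ forces $1-\gamma a(1+a)\ge\gamma$, i.e. $\gamma\le 1/(1+a+a^2)$. As $a\to 1$ the right-hand side tends to $1/3$, so for any $\gamma>1/3$ one can pick $a<1$ with $1/(1+a+a^2)<\gamma$; on the corresponding instance the $\gamma$-aggressive algorithm is infeasible (at item $3$, or already at item $2$ if $\gamma>1/(1+a)$). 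I expect the main obstacle to be the bookkeeping in the second step: one must argue carefully that the aggressive priority forces the overflow onto the fill-$\beta$ paths and that the hard-constraint rejection permanently removes that measure, since it is precisely this ``wasted'' measure (the $a^2$ term) that separates the bound $1/3$ from the $1/2$ one would obtain if the full paths were the only loss. As a sanity check, this instance yields $\mathbb{E}[T]=1+a+a^2$ for the coupling in the proof of Theorem~\ref{thm:hard_alg}, showing that the estimate $\mathbb{E}[T]\le 3$ there is asymptotically tight.
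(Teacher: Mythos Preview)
Your proof is correct and follows essentially the same approach as the paper: the paper uses the identical three-item instance with $S_1=S_3=\delta$ deterministically and $S_2\in\{0,1\}$ with $\Pr[S_2=1]=1-2\delta$, which is your construction under the reparametrization $a=1-2\delta$, and the same case analysis of the aggressive algorithm yields the same bound $\gamma\le 1/(1+a+a^2)\to 1/3$. Your additional sanity check tying the instance to $\mathbb{E}[T]=1+a+a^2$ in the coupling of Theorem~\ref{thm:hard_alg} is a nice observation that the paper does not make explicit.
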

\begin{proof}
    Let $\delta$ be a sufficiently small positive real.
    We show this upper bound by the following instance, which consists of three items:
    \begin{itemize}
        \item the size of the first item $S_1$ is $\delta$ with probability $1$,
        \item the size of the second item $S_2$ is $1$ with probability $1-2\delta$ and $0$ with probability $2\delta$, and
        \item the size of the last item $S_3$ is $\delta$ with probability $1$.
    \end{itemize}
    Here, the total expected size of the items is
    $\mathbb{E}[S_1+S_2+S_3]=\delta+1\cdot(1-2\delta)+\delta=1$.

    The $\gamma$-aggressive algorithm inserts the first item into the knapsack with probability $\gamma$, thereby the usage of the knapsack becomes $\delta$ with probability $\gamma$.
    Since the algorithm tries to insert the second item if the first item is placed in the knapsack, a capacity violation occurs with probability $\gamma\cdot(1-2\delta)$ when processing the second item.
    Therefore, to ensure an overall insertion probability of $\gamma$, the algorithm selects $\gamma\cdot(1-2\delta)$ of the sample paths where the first item was rejected and then tries to insert the second item.
    The distribution of the usage of the knapsack at the end of processing the second item is given as follows: 
    \begin{enumerate}[label=(\roman*)]
    \item exceeding $1$ (capacity violation) with probability $\gamma(1-2\delta)$; 
    \item exactly $1$ with probability $\gamma(1-2\delta)^2$; 
    \item $\delta$ with probability $\gamma\cdot 2\delta$; \label{iii}
    \item $0$ with probability $1-\gamma-\gamma(1-2\delta)^2$. \label{iv}
    \end{enumerate}
    Since the last item can be successfully placed in the knapsack only under conditions \ref{iii} and \ref{iv}, the probability that the last item fits is at most $\gamma\cdot 2\delta+\big(1-\gamma-\gamma(1-2\delta)^2\big)$.

    By taking the limit as $\delta\to 0$, we have $1-2\gamma$ as the probability of accepting the last item.
    This probability is at most $1-2\gamma<1-2/3=1/3<\gamma$ by $\gamma>1/3$, which means that the algorithm is not feasible.
\end{proof}
From this theorem, we need to consider a different algorithm to achieve worst case performance better than $1/3$.

\subsection{Analysis for the Special Case}
In this subsection, we examine a special case of the hard constraint KOCRS problem where the size distributions are binary distributions of $1$ or $\epsilon~(\le 1/n)$.
For this setting, we present an algorithm that successfully puts every item with probability $3/7$.
Since the instance in the proof of Theorem~\ref{thm:hard_upper} belongs to this special case, the probability $3/7$ is best possible.

Our algorithm, \emph{$\gamma$-reserved}, keeps the probability that the usage of the knapsack is in $(0,1)$ at $\gamma$.
The algorithm first puts a virtual item of size deterministically $\epsilon$ into the knapsack with probability $\gamma$.
Let $W_i$ be the random variable denoting the usage of the knapsack just before processing the $i$th item, including the virtual item.
Also, let $p_i$ be the probability that the size of the $i$th item is $1$.
Then, the $\gamma$-reserved algorithm tries to insert the $i$th item into the knapsack with the following probabilities depending on the realization $W_i$:
\begin{itemize}
    \item if $W_i\geq 1$, then the probability is $0$ (in this case, we cannot put any more items),
    \item if $0<W_i<1$, then the probability is $\frac{1-p_i}{1-p_i+p^2_i}$, and
    \item if $W_i=0$, then the probability is $\frac{\gamma\cdot p_i}{1-p_i+p^2_i}\cdot\frac{1}{\Pr[W_i=0]}$.
\end{itemize}
Note that $\frac{p_i}{1-p_i+p_i^2}$ is in the interval $[0,1]$ whenever $p_i\in[0,1]$, as illustrated in Figure~\ref{fig:f(x)}.
We call the algorithm feasible if $\Pr[W_i=0]\geq\frac{\gamma\cdot p_i}{1-p_i+p^2_i}$.

\begin{figure}[htp]
\centering
\begin{tikzpicture}[xscale=4,yscale=4]
\draw[-stealth,thick](0,-0.1)--(0,1.2)node[left]{$y$};
\draw[-stealth,thick](-0.1,0)--(1.1,0)node[below]{$x$};
\draw(0,0)node[below left]{O};
\draw [domain=0:1,blue,thick] plot[smooth](\x, {(\x)/(1-\x+\x*\x)});
\node[right] at (1.1,0.9) {$f(x)=\frac{x}{1-x+x^2}$};
\draw(0.347,-0.1)node[]{$\alpha_0$};
\draw[dashed](0.347,0)--(0.347,0.448);
\fill[black] (0.347,0.448) circle (.2mm);
\draw[dashed](1,0)--(1,1);
\draw[dashed](0,1)--(1,1);
\draw(1,-0.1)node[]{$1$};
\draw(-0.1,1)node[]{$1$};
\end{tikzpicture}
\caption{The graph of $f(x)=\frac{x}{1-x+x^2}$ on $[0,1]$, where $(\alpha_0,f(\alpha_0))$ is an inflection point} \label{fig:f(x)}
\end{figure}
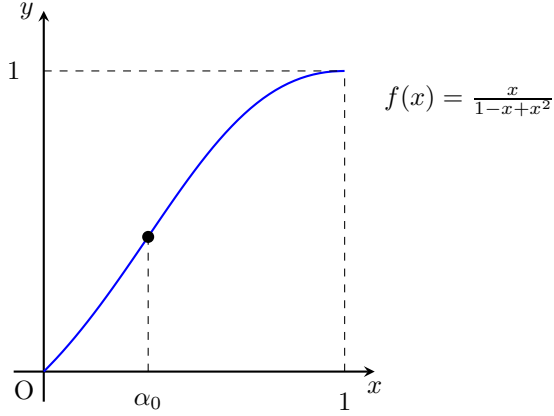

We show by induction that the algorithm keeps $\Pr[0<W_{i+1}<1]=\gamma$ if it is feasible.
The base case $\Pr[0<W_1<1]=\gamma$ clearly holds by the treatment of the virtual item.
For the inductive step, we get the probability $\Pr[0<W_{i+1}<1]$ as
\begin{align}
    \gamma\cdot\frac{1-p_i}{1-p_i+p_i^2}\cdot (1-p_i)
    +\gamma\cdot\left(1-\frac{1-p_i}{1-p_i+p_i^2}\right)
    +\frac{\gamma\cdot p_i}{1-p_i+p_i^2}\cdot (1-p_i)=\gamma.
\end{align}
In the left-hand side of the above equation, each term represents the probability that a sample path with $0 < W_i < 1$ is selected and the realized size is $\epsilon$, the probability that a sample path with $0 < W_i < 1$ is unselected at the $i$th step, and the probability that a sample path with $W_i=0$ is selected and the realized size is $\epsilon$.
Moreover, the algorithm successfully puts the $i$th item if it is feasible. In fact, the probability that the $i$th item is accepted is
\begin{align}
    \gamma\cdot\frac{1-p_i}{1-p_i+p_i^2}\cdot (1-p_i)
    +\frac{\gamma\cdot p_i}{1-p_i+p_i^2}=\gamma,
\end{align}
where the first term is the probability that a sample path with $0 < W_i < 1$ is selected and the realized size is $\epsilon$, and the second term is the probability that a sample path with $W_i=0$ is selected at the $i$th step, respectively.

We prove that the $\gamma$-reserved algorithm can guarantee each item to be accepted with a probability of $3/7$.
\begin{thm}\label{thm:special_alg}
For any $\gamma\in (0,3/7]$, 
the $\gamma$-reserved algorithm is feasible for any instance of the special case of the hard constraint KOCRS problem.
\end{thm}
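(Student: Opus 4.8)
The plan is to reduce feasibility at every step to a single inequality about the quantities $f(p_i)\coloneqq \frac{p_i}{1-p_i+p_i^2}$, and then dispatch that inequality with a one-line convexity estimate. Throughout I write $a_i\coloneqq \Pr[W_i=0]$, so that the feasibility condition at step $i$ is exactly $a_i\ge \gamma f(p_i)$.

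First I would track how the empty-knapsack mass $a_i$ evolves. The crucial structural observation is that the state $\{W=0\}$ is only ever left, never entered: from $W_i=0$ the algorithm either stays empty, moves into $(0,1)$ (if a small item is inserted), or closes the knapsack (if a size-$1$ item is inserted, or, from the middle state, a critical overflow occurs), and none of these transitions returns to the empty state. Combined with the already-established invariant $\Pr[0<W_i<1]=\gamma$, this yields the clean recursion $a_{i+1}=a_i(1-q_i)=a_i-\gamma f(p_i)$, where I use that the insertion probability $q_i$ from the empty state satisfies $a_i q_i=\frac{\gamma p_i}{1-p_i+p_i^2}=\gamma f(p_i)$. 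With the base value $a_1=1-\gamma$ coming from the virtual item, this telescopes to $a_i=(1-\gamma)-\gamma\sum_{j<i} f(p_j)$.

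Substituting into the requirement $a_i\ge \gamma f(p_i)$ and rearranging, step $i$ is feasible precisely when $\gamma\bigl(1+\sum_{j\le i} f(p_j)\bigr)\le 1$. Since the left-hand side is monotone in $i$, it suffices to control the full sum, so the whole theorem reduces to proving $\sum_{j=1}^n f(p_j)\le 4/3$ under the size budget $\sum_i \mathbb{E}[S_i]\le 1$: indeed $\gamma(1+4/3)\le 1$ is equivalent to $\gamma\le 3/7$. Here I would first note that $\mathbb{E}[S_i]=p_i+(1-p_i)\epsilon\ge p_i$, so the budget forces $\sum_i p_i\le 1$. The key inequality is then $f(x)=\frac{x}{1-x+x^2}\le \frac43 x$ for all $x\in[0,1]$, which is immediate from $1-x+x^2=(x-\tfrac12)^2+\tfrac34\ge \tfrac34$; summing gives $\sum_j f(p_j)\le \frac43\sum_j p_j\le \frac43$, completing the reduction.

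I expect the main conceptual work to be the transition bookkeeping: correctly recognizing that an overflow attempt on a nonempty knapsack closes it (so the middle-state mass stays pinned at exactly $\gamma$), and that this collapses the entire analysis to the scalar estimate $f(x)\le \frac43 x$. The fact that $1-x+x^2$ bottoms out at $3/4$ at $x=1/2$ is precisely what makes $3/7$ the right constant, which is reassuring given that the tight instance of Theorem~\ref{thm:hard_upper} uses two size-$1$ items of probability $\approx 1/2$. I would also flag the minor technical point that $\epsilon$ must be taken small enough (e.g.\ $(n+1)\epsilon<1$) that inserting small items never triggers an overflow, which is what legitimizes the invariant $\Pr[0<W_i<1]=\gamma$ used throughout.
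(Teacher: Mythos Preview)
Your reduction to the inequality $\sum_{j} f(p_j)\le 4/3$ matches the paper's exactly: both track $a_i=\Pr[W_i=0]$, derive the telescoping $a_{i+1}=a_i-\gamma f(p_i)$ from $a_1=1-\gamma$, and observe that feasibility at every step is equivalent to $\gamma\bigl(1+\sum_j f(p_j)\bigr)\le 1$.

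Where you diverge is in dispatching the inequality, and your route is genuinely simpler. The paper treats $\max\sum_i f(p_i)$ subject to $\sum_i p_i=1$ as a constrained optimization problem: it locates the inflection point $\alpha_0$ of $f$, argues from concavity on $[0,\alpha_0]$ and convexity on $[\alpha_0,1]$ that any optimizer has at most two or three nonzero coordinates, and then does a calculus case split to identify the maximum $4/3$ at $p_1=p_2=1/2$. Your pointwise bound $f(x)\le \frac{4}{3}x$, coming from $1-x+x^2=(x-\tfrac12)^2+\tfrac34\ge\tfrac34$, bypasses all of this in one line and still hits the tight constant (with equality precisely at $x=1/2$, matching the paper's extremal configuration). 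What the paper's approach buys is an explicit description of the maximizer, but that is not needed for the theorem; your argument is shorter, more robust, and loses nothing.
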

\begin{proof}
    By the definition of the algorithm, we have $\Pr[W_1=0]=1-\gamma$ and $\Pr[W_{i+1}=0]=\Pr[W_i=0]-\frac{\gamma\cdot p_i}{1-p_i+p_i^2}$ for each $i$. Thus, to ensure feasibility, it is sufficient to satisfy that 
    \begin{align}
    \Pr[W_{n+1}=0]
    =1-\gamma-\sum_{i=1}^n\frac{\gamma\cdot p_i}{1-p_i+p_i^2}
    =1-\gamma\left(1+\sum_{i=1}^n\frac{p_i}{1-p_i+p_i^2}\right)
    \ge 0.
    \end{align}
    Thus, if the value of $\sum_{i=1}^n\frac{p_i}{1-p_i+p_i^2}$ is at most $4/3$, the feasibility is guaranteed for any $\gamma\in(0,3/7]$.

    Note that $\sum_{i=1}^np_i$ is at most $1$ since the total expected size of the items is at most $1$. Hence, our task is to demonstrate that the optimal value of the following optimization problem is at most $4/3$:
\begin{align}
\begin{array}{rll}
 \text{max.}&\sum_{i=1}^n\frac{p_i}{1-p_i+p_i^2}&\\
 \text{s.t.}&\sum_{i=1}^np_i=1,&\\ 
            &p_i\ge 0          &(\forall i\in[n]).\label{eq:special_opt}
\end{array}\label{eq:special_opt}
\end{align}
    Fixing $n$, let $(p_1^*,\dots,p_n^*)$ be an optimal solution to the problem \eqref{eq:special_opt}. Without loss of generality, we may assume that $p_1^*\ge p_2^*\ge\dots\ge p_n^*$.
    We observe properties of the function $f(x)=\frac{x}{1-x+x^2}$ depicted in Figure~\ref{fig:f(x)}. This function has an inflection point at $x=\alpha_0\approx 0.3472$ and is concave in the range $[0,\alpha_0]$ and convex in the range $[\alpha_0,1]$.
    If $\alpha_0>p^*_i\ge p^*_j>0$ for some indices $i<j$, then the objective value can be increased by increasing $p^*_i$ by $\delta~(<\min\{\alpha_0-p^*_i,\,p^*_j\})$ and decreasing $p^*_j$ by $\delta$ because of the concavity of the function $f$ on $[0,\alpha_0]$.
    Hence, there is at most one index $i$ such that $p^*_i\in (0,\alpha_0)$.
    In addition, if $p^*_i>p^*_j\ge\alpha_0$ for some indices $i<j$, then the objective value can be increased by setting $p_i^*$ and $p_j^*$ to their average due to the convexity of the function $f$ on $[\alpha_0,1]$.
    Combining these facts with $\alpha_0>1/3$, the possible optimal solutions are limited to the following two cases:
    \begin{itemize}
    \item[(a)] $p^*_1=p^*_2=q$, $p^*_3=1-2q$, and $p^*_4=\dots=p^*_n=0$, where $q\in[\alpha_0,\,1/2]$,
    \item[(b)] $p^*_1=q$, $p^*_2=1-q$, and $p^*_3=\dots=p^*_n=0$, where $q\in[1-\alpha_0,\,1]$.
    \end{itemize}
    
    For case (a), the objective value is 
    \begin{align}
    \frac{2q}{1-q+q^2}+\frac{1-2q}{2q+(1-2q)^2}.
    \end{align}
    The derivative of the above with respect to $q$ equals
    \begin{align}
        \frac{2(1-q)(3q-1)(4q^4+q(2q-1)^2+(3q-1))}{(1-q+q^2)^2(1-2q+4q^2)^2},
    \end{align}
    which is positive when $q\in [\alpha_0,1/2] \subseteq [1/3,1]$.
    Hence, the objective value is maximized at $q=1/2$ over the interval $[\alpha_0,1/2]$, and the maximum value is $4/3$.

    For case (b), the objective value is 
    \begin{align}
    \frac{q}{1-q+q^2}+\frac{1-q}{q+(1-q)^2}=\frac{1}{(q-\frac{1}{2})^2+\frac{3}{4}}, 
    \end{align}
    which is maximized at $q=1-\alpha_0$, and the maximum value is smaller than $4/3$.
    Hence, the optimum value of \eqref{eq:special_opt} is $4/3$ and $\sum_{i=1}^n\frac{p_i}{1-p_i+p_i^2}$ is at most $4/3$.

    Therefore, setting $\gamma$ to $3/7$ makes the algorithm feasible for any instance.
\end{proof}

\section{Analysis for the Soft Constraint Setting}\label{sec:soft}
In this section, we analyze the soft constraint KOCRS problem.

\subsection{Impossibility}
We first observe that no algorithm can guarantee acceptance probability greater than $1/2$.
\begin{thm}\label{thm:soft_upper}
    For any positive real $\delta$, there exists an instance of the soft constraint KOCRS problem, where no algorithm has the acceptance probability of at least $1/2+\delta$.
    This impossibility holds even when item sizes are restricted to $1$ or a small positive constant $\epsilon$.
\end{thm}
\begin{proof}
Without loss of generality, we may assume that $\delta<1/2$.
Let $n=2+\lceil 1/\delta\rceil$ and $\epsilon$ be a positive real smaller than $\delta/n$.
Consider the following instance with $n$ items: 
\begin{itemize}
\item the size of the first item $S_1$ is $1$ with probability $1-n\epsilon$ and $\epsilon$ with probability $n\epsilon$,
\item the size of the other item $S_i~(i=2,3,\dots,n)$ is $\epsilon$ with probability $1$.
\end{itemize}
Note that the total expected size of the items is 
\begin{align}
\sum_{i=1}^n\mathbb{E}[S_i]
=\left(1-n\epsilon+n\epsilon^2\right)+(n-1)\cdot \epsilon=1-\epsilon+n\epsilon^2\le 1,
\end{align}
since $\epsilon<\delta/n\le 1/n$.
We will prove the theorem by contradiction. 
Assume that there exists an online algorithm that accepts each item with probability at least $1/2+\delta$.
After processing the first item with this algorithm, the size in the knapsack is $1$ with a probability of at least 
\begin{align}
(1/2+\delta)\cdot(1-n\epsilon)>1/2+\delta-n\epsilon>1/2,
\end{align}
where the first and the second inequalities hold by $\delta<1/2$ and $\epsilon<\delta/n$, respectively.
Consequently, when processing each of the remaining items, a capacity overflow will occur with a probability of at least $\delta$.
Hence, after processing the $k$th item $(k=2,3,\dots,n)$, the probability of knapsack without violating capacity is at most $1-\delta\cdot (k-1)$.
Specifically, after processing all the items (i.e., $k=n$), the probability of knapsack without violating capacity is at most $1-\delta\cdot(\lceil 1/\delta\rceil+1)<0$. This is impossible, and therefore, no algorithm can accept every item with a probability of at least $1/2+\delta$ for this instance.
\end{proof}

\subsection{Algorithm}
Next, we prove that the aggressive policy can put every item into the knapsack with probability at least $1/2$, which is best possible by Theorem~\ref{thm:soft_upper}.
Recall that the $\gamma$-aggresive policy selects sample paths that utilize the most capacity so that the probability of each item being placed is exactly $\gamma$.
However, it should be noted that the $\gamma$-aggresive policy do not select sample paths that have a knapsack utilization rate equal to $1$ or more.
Let $W_i$ be the random variable representing the total size of items in the knapsack just before processing the $i$th item.
Then, the $\gamma$-aggresive algorithm for the soft constraint setting inserts the $i$th item into the knapsack with the following probabilities, according to the value of $W_i$:
\begin{itemize}
    \item if $W_i\ge 1$, then the probability is $0$ (in this case, we cannot put any more items in the knapsack),
    \item if $\theta_i<W_i< 1$, then the probability is $1$,
    \item if $W_i=\theta_i$, then the probability is $(\gamma-\Pr[\theta_i<W_i< 1])/\Pr[W_i=\theta_i]$,
    \item if $W_i<\theta_i$, then the probability is $0$,
\end{itemize}
where $\theta_i=\inf\big\{\theta\in[0,1]\mid \Pr[\theta\le W_i< 1]\ge\gamma\big\}$.
Note that, if such a $\theta_i$ exists (i.e., $\Pr[W_i< 1]\ge \gamma$), the algorithm successfully inserts the $i$th item into the knapsack with probability exactly $\gamma$.
We will call the algorithm \emph{feasible} when such a $\theta_i$ exists for all items $i$, i.e., $\Pr[W_i<1]\ge\gamma$ for all $i$.
For a given instance, if the $\gamma$-aggressive algorithm is feasible, it means that the algorithm successfully inserts every item with probability $\gamma$.
We show that the $1/2$-aggressive algorithm is feasible for any instance of the soft constraint KOCRS problem.

\begin{thm}\label{thm:soft_alg}
For any $\gamma\in (0,1/2]$, 
the $\gamma$-aggressive algorithm successfully inserts every item with probability $\gamma$ for any instance of the soft constraint KOCRS problem.
\end{thm}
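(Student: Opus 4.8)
The plan is to mirror the analysis of the hard-constraint case (Theorem~\ref{thm:hard_alg}), exploiting the one structural advantage the soft constraint provides: a critical item is \emph{accepted} rather than rejected, so an overflow never forces us to re-attempt the offending item. Since feasibility of the $\gamma$-aggressive algorithm is exactly the condition $\Pr[W_i<1]\ge\gamma$ for every $i$, and the algorithm keeps the measure of ``live'' sample paths (those with $0<W_i<1$) equal to $\gamma$ as long as it does not exhaust the reserve of paths with $W_i=0$, it suffices to bound the total measure of fresh sample paths the algorithm ever needs and show that it does not exceed $1$ when $\gamma=1/2$.

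First I would set up a procedure analogous to the one in the proof of Theorem~\ref{thm:hard_alg}, but using only a single family of independent draws $X_1,\dots,X_n$ from $F_1,\dots,F_n$ (no second copy $X_1',\dots,X_n'$). The procedure processes the items in order while maintaining a running sum; whenever the sum first exceeds $1$ at some index $\ell<n$, that item is the critical item, and---because the soft constraint accepts it---the procedure simply starts a fresh run from item $\ell+1$ with an empty knapsack, incrementing a counter $T$. As in the hard case, $T$ records the number of sample-path segments the aggressive algorithm must open, so that $\gamma\cdot\mathbb{E}[T]$ equals the total measure of sample paths it consumes, and feasibility is equivalent to $\gamma\cdot\mathbb{E}[T]\le 1$.

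Next I would bound $\mathbb{E}[T]$. Every run except possibly the last ends in an overflow, so the item blocks belonging to those $T-1$ runs are disjoint and each such block has total size strictly greater than $1$; hence $T-1<\sum_{i=1}^n X_i$, and since $T-1$ is an integer, $T-1\le\big\lfloor\sum_{i=1}^n X_i\big\rfloor$. Taking expectations and using $\sum_{i=1}^n\mathbb{E}[S_i]\le 1$ yields
\begin{align}
\mathbb{E}[T]\le 1+\mathbb{E}\!\left[\big\lfloor\textstyle\sum_{i=1}^n X_i\big\rfloor\right]\le 1+\sum_{i=1}^n\mathbb{E}[S_i]\le 2.
\end{align}
The crucial difference from the hard case is precisely the absence of the second copy of the draws: there the rejected critical item must be re-drawn and re-inserted, contributing an extra $\sum_i\mathbb{E}[X_i']\le 1$ and giving $\mathbb{E}[T]\le 3$, whereas here the accepted critical item is consumed once and we lose only the single block it completes. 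With $\mathbb{E}[T]\le 2$ we obtain $\gamma\,\mathbb{E}[T]\le 1$ for every $\gamma\le 1/2$, establishing feasibility.

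I expect the main obstacle to be making the correspondence between the aggressive algorithm and the procedure fully rigorous, rather than the arithmetic. Specifically, I must argue that maintaining the live-path measure at $\gamma$ by always selecting the highest-utilization paths is exactly realized by the procedure's ``start a fresh run after each overflow'' rule, and that the reserve of $W_i=0$ paths is never depleted \emph{at any prefix}, not merely in aggregate. The cleanest route is to observe that the restriction of the instance to its first $i$ items again satisfies the expected-size bound, so the same estimate gives $\mathbb{E}[T^{(i)}]\le 2$ for every prefix; combined with the monotone depletion of the reserve, this certifies $\Pr[W_i<1]\ge\gamma$ at each step $i$. Some care is also needed at the boundary $W_i=\theta_i$ and with the atoms and continuous parts of the $F_i$, which I would treat exactly as in the hard-constraint argument.
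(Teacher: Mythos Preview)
Your proof is correct, but it takes a more elaborate route than the paper's. The paper's argument for the soft constraint is a direct application of Markov's inequality: assuming by induction that items $1,\dots,i-1$ are each accepted with probability exactly $\gamma$, the expected load just before step $i$ is
\[
\mathbb{E}[W_i]=\gamma\sum_{j<i}\mathbb{E}[S_j]\le\gamma\le\tfrac12,
\]
so $\Pr[W_i\ge1]\le\mathbb{E}[W_i]\le\tfrac12\le 1-\gamma$, which is exactly feasibility at step $i$. The soft constraint is what makes this work: because every attempted item is accepted, the insertion indicator at step $j$ is decided before $S_j$ is revealed and is therefore independent of $S_j$, giving $\mathbb{E}[W_i]$ the clean product form above. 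In the hard case the acceptance indicator depends on $S_j$ through the event $W_j+S_j\le1$, the Markov route breaks down, and the paper is forced into the procedure argument there.

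Your sample-path procedure is a valid alternative and has the virtue of giving a unified framework for both constraint types; you correctly identify that dropping the second family $X'_k$ (since the critical item is already accepted and need not be re-attempted) tightens the bound from $\mathbb{E}[T]\le3$ to $\mathbb{E}[T]\le2$. The cost is that you inherit the somewhat informal ``$\gamma\cdot\mathbb{E}[T]$ equals the measure of sample paths consumed'' correspondence, whose justification you rightly flag as the main obstacle. (One small point: in the soft setting a path becomes dead when the running sum reaches $1$, not only when it strictly exceeds $1$, so the run-ending condition should be $\ge1$; this does not affect your bound since each completed run still has sum at least $1$.) In the soft case the paper sidesteps all of this with the three-line Markov argument, so your prefix-checking step, while sound, is unnecessary machinery here.
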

\begin{proof}
We will prove this by mathematical induction.
Assume that each of the items up to the $(i-1)$st one has been successfully placed in the knapsack with probability $\gamma$.
The expected total size of items in the knapsack just before processing the $i$th item is $\mathbb{E}[W_i]=\sum_{j=1}^{i-1}\gamma\cdot\mathbb{E}[S_j]$.
By using the assumptions that $\sum_{j=1}^n\mathbb{E}[S_j]\leq 1$ and $\gamma\le 1/2$, we have $\mathbb{E}[W_i]\le 1/2$. Consequently, the probability that $W_i$ is less than one is at least
\begin{align}
    \Pr[W_i<1]=1-\Pr[W_i\ge 1]\ge 1-\mathbb{E}[W_i]\ge 1/2,
\end{align}
where the first inequality holds by Markov's inequality.
Therefore, the $\gamma$-aggressive algorithm is feasible, and it successfully inserts every item with probability $\gamma$.
\end{proof}

\bibliographystyle{abbrv}
\bibliography{main}
\end{document}